\DeclareMathOperator{\res}{\mathsf{res}}
\DeclareMathOperator{\rsc}{\mathsf{rsc}}
\DeclareMathOperator{\nonce}{\mathsf{nonce}}
\DeclareMathOperator{\GF}{\mathsf{GF}}
\DeclareMathOperator{\RTK}{\mathsf{RTK}}
\DeclareMathOperator{\RAS}{\mathsf{RAS}}
\DeclareMathOperator{\MAPK}{\mathsf{MAPK}}
\DeclareMathOperator{\PI3K}{\mathsf{PI3K}}
\DeclareMathOperator{\PIP3}{\mathsf{PIP3}}
\DeclareMathOperator{\FOXO3}{\mathsf{FOXO3}}
\DeclareMathOperator{\AKT}{\mathsf{AKT}}
\DeclareMathOperator{\cycE}{\mathsf{cycE/CDK2}}
\DeclareMathOperator{\Rb}{\mathsf{Rb}}
\DeclareMathOperator{\E2F}{\mathsf{E2F}}
\DeclareMathOperator{\TSC}{\mathsf{TSC}}
\DeclareMathOperator{\PRAS40}{\mathsf{PRAS40}}
\DeclareMathOperator{\mTORC1}{\mathsf{mTORC1}}
\DeclareMathOperator{\EIF4F}{\mathsf{EIF4F}}
\DeclareMathOperator{\S6K}{\mathsf{S6K}}
\DeclareMathOperator{\Pro}{\mathsf{Prolif}}
\DeclareMathOperator{\uPro}{\mathsf{UProlif}}
\title{Controllability of reaction systems}
\author{Sergiu Ivanov \and Ion Petre}
\institute{Sergiu Ivanov \at
IBISC, Universit\'{e} \'{E}vry, Universit\'{e} Paris-Saclay, France\\
\email{sergiu.ivanov@univ-evry.fr}
\and Ion Petre \at
Department of Mathematics and Statistics, University of Turku, Finland and
\at
National Institute for Research and Development in Biological Sciences, Romania\\
\email{ion.petre@utu.fi}
} 
\date{Received: date / Accepted: date}
\begin{document}

\maketitle
\begin{abstract}
	Controlling a dynamical system is the ability of changing its configuration arbitrarily through a suitable choice of inputs. It is a very well studied concept in control theory, with wide ranging applications in medicine, biology, social sciences, engineering. We introduce in this article the concept of controllability of reaction systems as the ability of transitioning between any two states through a suitable choice of context sequences. We show that the problem is \PSPACE-hard. We also introduce a model of oncogenic signalling based on reaction systems and use it to illustrate the intricacies of the controllability of reaction systems.

\keywords{reaction systems \and controllability \and oncogenic signalling \and computational complexity.}
\end{abstract}


\section{Introduction}
Reaction systems are a biologically inspired model of computing
originally introduced in~\cite{Ehrenfeucht:2007aa}.  They capture two
fundamental interactions typically present between biochemical entities---activation and inhibition.
Reaction systems are dynamical systems: reactions transform a set of
reactants into a set of products provided that none of its inhibitors
are present, which are then transformed further into other products, etc.

The reactions in reaction systems are governed by two fundamental
principles: the \emph{threshold principle} and the
\emph{non-permanency principle}.  The threshold principle stipulates
that when a resource is available it is available in unlimited
amounts.  This defines reaction systems as a qualitative modelling
framework, whose states are sets of species, without any quantitative
information.  This also means that concurrency has to be modelled
explicitly, rather than implicitly via some intrinsic mechanisms of
the modelling device.  The non-permanency principle states that if a
resource is not explicitly sustained/produced by reactions, it will disappear.  The next
state of a reaction system only consists of the species explicitly
produced by the reactions enabled in the previous state.

Reaction systems are open systems: there is a notion of context that adds to the current state in each step of its dynamic process. The next state is produced by the reactions applied to the previous state plus the species added by the context. 

Since their introduction in 2007, two major research directions on
reaction systems have been established.  The first direction focuses
on their formal properties as a dynamical systems: sequences of
states~\cite{DBLP:journals/tcs/Salomaa12a}, \cite{cSalomaa_Arto12a}, power of
small systems for various size measures~\cite{Formenti2015}, \cite{jSalomaa_Arto13b}, \cite{cSalomaa_Arto14a}, \cite{jSalomaa_Arto15c}, \cite{TehAtanasiu2020}, cycles and attractors~\cite{Azimi:2017aa}, \cite{Dennunzio201996}, \cite{Formenti2014b}, connections to propositional
logic~\cite{DBLP:journals/ijfcs/Salomaa13}, etc.  The second major
direction of research on reaction systems consists in exploring their
potential as a modelling framework, in particular for biological
applications~\cite{jAzIaPe14a}, \cite{IF-RS}, \cite{CorolliMMBM12}.  This direction
sparked interest in model checking for reaction systems, i.e. formally
defining relevant properties, evaluating their complexity, and
designing algorithms for checking them~\cite{jAzGrIvMaPePo16a}, \cite{jAzGrIvPe15a}, \cite{Meski:2017aa}, \cite{Meski2015}.  These works revealed a
whole wealth of properties whose complexity ranges from polynomial to
\PSPACE-hard.

In this paper, we continue the study of potential applications of
reaction systems to biological and medical research, and we focus on
\emph{controllability}.  Intuitively, controllability of a dynamical
system is the ability of driving this system to any one of its states,
starting from any other state.  Controllability is a strong property,
which has attracted a lot of attention, especially in the case of
biological networks (e.g.,~\cite{KolchHGK2015}, \cite{LiuSB2011}).  Extensive
research has been conducted into the practical feasibility of
different variants of controllability for biological networks, and
exact and approximate algorithms for finding ways to drive them,
e.g.~\cite{NetControlTCBB2018}, \cite{Gao:2014aa}, \cite{Kanhaiya:2017aa}, \cite{NetControlGeneticFrontiers}.  These results have considerable
potential for applications.  For example,~\cite{BadhwarBagler2015}
identifies ``driver neurons which can provide full control over the
network'' governing the actions of the C.~elegans worm. Also, \cite{art-lsb11}, \cite{Gao:2014aa}, \cite{art-kcgp17}, \cite{MM3networks2020} discuss applications in drug repurposing and personalised medicine.

The main goal of the present paper is defining controllability for
reaction systems, and establishing some computational 
complexity evaluations.  Since reaction systems are intrinsically open
systems due to the context sequences governing their evolution,
introducing controllability is quite natural.  In addition to
conventional, unrestricted controllability, we further define
a restricted variant, similar to the notion typically used in network
control theory, i.e., target controllability~\cite{Gao:2014aa}, \cite{NetControlTCBB2018}.  We illustrate our definitions on a novel, oncogenic signalling model that we constructed based the Boolean model of~\cite{G.-T.-Zanudo:2018aa}.  We show that imposing restrictions on the context
sequences or on the allowed observables pushes the complexity of the controllability of reaction systems
to \PSPACE-hard.

This paper is structured as follows.  Section~\ref{sec:preliminaries}
recalls the basic notions, in particular reaction systems,
reachability, Boolean networks, and general controllability.
Section~\ref{sec:example} introduces the running example we will use
to illustrate the notions of controllability, and which is translated
from the Boolean model in~\cite{G.-T.-Zanudo:2018aa}.
Section~\ref{sec:control} defines controllability for reaction
systems, and Section~\ref{sec:target_control} defines target
controllability.  Finally, Section~\ref{sec:complexity} evaluates the
complexity of the decision problems associated with our definitions.


\section{Preliminaries}\label{sec:preliminaries}

\subsection{Reaction systems}

We recall in this section some of the basic concepts of reaction systems. The presentation only aims to fix the notation and is kept brief. For details we refer to \cite{Ehrenfeucht:2007aa}, \cite{Ehrenfeucht:2016aa}.

\paragraph{Basic definitions.} We introduce here the basic concepts around reaction systems. 
\begin{definition}
	Let $S$ be a finite (so-called background) set. 
	\begin{itemize}
		\item A \emph{reaction} in $S$ is a triplet $r=(R,I,P)$ such that  $R,I,P\subseteq S$, $R\cap I=\emptyset$. We call $R$, $I$, $P$ the \emph{reactant}, the \emph{inhibitor}, and the \emph{product} set of reaction $r$, resp. We also say that $R\cup I$ is the \emph{resource set} of reaction $r$ and denote it $R\cup I=\rsc(r)$. For a set $A$ of reactions, its resource set is $\rsc(A)=\cup_{r\in A}\rsc(r)$.
		\item Let $T\subseteq S$. We say that reaction $r=(R,I,P)$ is \emph{enabled} in $T$ if $R\subseteq T$ and $I\cap T=\emptyset$. The \emph{result} of reaction $r$ on $T$ is $\res_r(T)=P$ if $r$ is enabled on $T$ and it is $\res_r(T)=\emptyset$ otherwise. 
		\item Let $R$ be a set of reactions in $S$ and $T\subseteq S$. The result of $R$ on $T$ is $\res_R(T)=\cup_{r\in R}\res_r(T)$.
		\item A \emph{reaction system} $\mathcal{A}=(S,A)$ (over $S$) consists of a (finite) set of reactions $A$ in $S$. 
		\item The \emph{states} of reaction system $\mathcal{A}=(S,A)$ are the subsets of $S$. We say that state $V\subseteq S$ is \emph{reachable} in $\mathcal{A}$ if it is in the domain of $\res_A$, i.e.,  $V=\res_A(U)$, for some $U\subseteq S$.
		\item Let $\mathcal{A}=(S,A)$ be a reaction system and let $n\geq 1$. Let $\gamma=C_0,C_1,\ldots C_n\subseteq S$ be a sequence of so-called context sets. The \emph{interactive process} $\pi_\gamma(\mathcal{A})$ in $\mathcal{A}$ defined by $\gamma$ consists of two state sequences of length $n$ $\pi_\gamma(\mathcal{A})=(\delta_\gamma(\mathcal{A}),\tau_\gamma(\mathcal{A}))$ defined in the following way:
		
			\begin{itemize}
				\item $\delta_\gamma(\mathcal{A})=D_0,D_1,\ldots,D_n$ is the \emph{result sequence} of  $\pi_\gamma(\mathcal{A})$ and $\tau_\gamma(\mathcal{A})=W_0,W_1,\ldots,W_n$ is its \emph{state sequence};
				\item $D_0=\emptyset$, and $D_i=\res_\mathcal{A}(C_{i-1}\cup D_{i-1})$ for all $i\in\{1,\ldots,n\}$;
				\item $W_i=C_i\cup D_i$ for all $i\in\{0,1,\ldots,n\}$. $W_0$ is called the \emph{initial state} of $\pi_\gamma(\mathcal{A})$. We say that $\pi_\gamma(\mathcal{A})$ \emph{starts} in $W_0$ and \emph{ends} in $W_n$ and denote it $W_n=\res_\mathcal{A}^\gamma(W_0)$ (the \emph{result along context sequence $\gamma$}).
				\item We say that state $V$ is \emph{reachable} from state $U$ if there is a context sequence $\gamma$ such that $V=\res_\mathcal{A}^\gamma(U)$.
				\item If $\gamma$ consists of empty (context) sets only, then $\pi_\gamma(\mathcal{A})$ is called \emph{context-independent}. 
			\end{itemize}
	\end{itemize}
\end{definition}

The dynamic processes defined by interactive processes can be seen as state transition systems. We define the state transition graph of a reaction system as the graph having as nodes the states of the reaction system and the edges defined by the result function of the reaction system as follows. 

\begin{definition}
	Let $\mathcal{A}=(S,A)$ be a reaction system and $I\subseteq S$. The $I$-context graph of reaction system $\mathcal{A}$ is the graph $\mathcal{G}_\mathcal{A}^I=(\mathcal{P}(S),E)$, where the set of edges $E$ is $E=\{(X,Y)\mid \exists  C\subseteq I: \res_\mathcal{A}(X\cup C)=Y\}$. 
\end{definition}

In the definition of the context graph we restrict the context sets to be subsets of a given ``input'' set $I$ -- this will be useful when we introduce the notions of controllability for reaction systems. For $I=S$ there is no constraint on the context sets and we obtain the usual transition system associated to a reaction system, see \cite{Ehrenfeucht:2016aa}.

Model checking for reaction systems has been considered in a number of different setups, based on, e.g., temporal logic (\cite{Meski2015}, \cite{Meski:2017aa}, \cite{Meski2019}), and computational complexity (\cite{jAzGrIvPe15a}, \cite{jAzGrIvMaPePo16a}, \cite{Azimi:2017aa}). We recall here the result on the reachability problem \cite{ReachabilityRS}.

\begin{theorem}[\cite{ReachabilityRS}]
	Deciding if a state $V$ of a reaction system $\mathcal{A}$ is reachable from a state $U$ is \PSPACE-complete. 
\end{theorem}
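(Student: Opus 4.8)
The plan is to prove \PSPACE-completeness by treating the two inclusions separately: a Savitch-style bounded search gives membership, and a polynomial-time reduction from the acceptance problem of a space-bounded Turing machine gives hardness.

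\emph{Membership.} By the definition of the interactive process one has $W_{i+1}=C_{i+1}\cup\res_\mathcal{A}(W_i)$, so the successor state depends only on $W_i$, and a state $Y$ can follow a state $X$ for some context precisely when $Y\supseteq\res_\mathcal{A}(X)$; this relation is decidable in polynomial time since $\res_\mathcal{A}$ is. Hence ``$V$ is reachable from $U$'' is exactly reachability, from $U$ to $V$, in the graph on the $2^{|S|}$ subsets of $S$ equipped with this polynomially checkable edge relation, and if a path exists then one of length at most $2^{|S|}$ does. I would therefore guess such a path one state at a time, storing only the current state and a step counter of $|S|$ bits and verifying each edge in polynomial time. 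This runs in nondeterministic polynomial space, which equals \PSPACE\ by Savitch's theorem (equivalently, one runs the recursive midpoint reachability procedure directly in polynomial space).

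\emph{Hardness.} I would reduce from the \PSPACE-complete problem of deciding, given a deterministic Turing machine $M$, a word $w$, and $1^s$, whether $M$ accepts $w$ within space $s$. The background set of $\mathcal{A}$ contains a species $a_c^k$ for every tape cell $k\le s$ and symbol $c$, a species $q_j^p$ for every cell $j$ and control state $p$ (jointly encoding head position and state), and one further species $\texttt{poison}$; note that the encodings of legal configurations are then pairwise incomparable under inclusion. Polynomially many reactions implement one step of $M$ in the context-independent dynamics: transition reactions of the form $(\{q_j^p,a_c^j\},\emptyset,\{q_{j'}^{p'},a_{c'}^j\})$ and copy reactions $(\{a_c^k\},\{q_k^p\mid p\},\{a_c^k\})$ that preserve the symbols of the cells away from the head. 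Additional reactions produce $\texttt{poison}$ from any state that is not a well-formed configuration encoding (two symbols in one cell, two heads, a missing symbol, etc.), and the reaction $(\{\texttt{poison}\},\emptyset,\{\texttt{poison}\})$ makes $\texttt{poison}$ permanent. I take $U$ to be the encoding of the initial configuration of $M$ on $w$ and $V$ the encoding of the canonical accepting configuration, with $\texttt{poison}\notin V$.

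\emph{Correctness, and the main obstacle.} If $M$ accepts, the context sequence $U,\emptyset,\ldots,\emptyset$ makes $\mathcal{A}$ replay the computation of $M$ and reach $V$. Conversely, take any interactive process from $U$ and let step $i$ be the first at which the context injects a species not already produced; if there is no such step the process is the honest simulation of $M$ and reaches $V$ only if $M$ accepts. Otherwise, all states before step $i$ are legal configuration encodings coinciding with the honest simulation, $W_i$ is a proper superset of a legal configuration encoding and hence, by incomparability, is not $V$, and from step $i$ onward the state permanently contains $\texttt{poison}$, so it is never $V$ either. Thus $V$ is reachable from $U$ iff $M$ accepts $w$ within space $s$, and $\mathcal{A},U,V$ are clearly computable in polynomial time. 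The hard part is this hardness direction — specifically, designing the encoding and the poisoning reactions so that every state into which the context could divert the computation is detectably ill-formed and therefore irrevocably tainted by $\texttt{poison}$; the step-by-step simulation of $M$ and the membership argument are routine.
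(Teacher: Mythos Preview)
The paper does not prove this theorem: it is merely recalled from~\cite{ReachabilityRS} as a known result, with no argument supplied. There is therefore no proof in the paper to compare your attempt against.

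On its own merits your sketch is sound and follows the route one would expect for this kind of statement. The membership argument is correct: since $W_{i+1}=C_{i+1}\cup\res_\mathcal{A}(W_i)$ with $C_{i+1}\subseteq S$ arbitrary, the one-step relation $X\to Y$ holds iff $Y\supseteq\res_\mathcal{A}(X)$, which is polynomial-time checkable, and path-guessing over $2^{|S|}$ states with an $|S|$-bit counter gives nondeterministic polynomial space, hence \PSPACE\ by Savitch. For hardness, encoding space-$s$ deterministic configurations as pairwise \emph{incomparable} subsets and using a persistent \texttt{poison} species to absorb every context-induced deviation is the standard mechanism for this model; the point you correctly isolate as the crux --- that any nontrivial context injection turns the current legal encoding into a proper superset, which by incomparability is neither $V$ nor any other legal encoding and therefore fires a \texttt{poison}-producing reaction on the next step --- is exactly what makes the reduction robust against arbitrary contexts. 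The detection reactions you list (duplicate symbol in a cell, duplicate head, and the \texttt{poison} self-loop) suffice because the only way a superset of a legal encoding can be ill-formed is by one of these duplications or by containing \texttt{poison} itself.
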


\paragraph{Boolean networks and reaction systems.}
Multiple connections between Boolean functions and reaction systems
have been established, see
\cite{BarbutiBGLM18,ATourOfRS,Ehrenfeucht:2007aa}.  In what concerns
Boolean networks, different ways of defining the model exist, as for
example threshold Boolean networks (e.g.,
\cite{BarbutiBGLM18,VuongCIGT17}), or Boolean networks employing
propositional formulae (e.g., \cite{BianeD19,G.-T.-Zanudo:2018aa}).
We only consider here the Boolean networks employing propositional
formulae, working in synchronous mode.

We recall a standard way of simulating a single Boolean
function with reaction systems, proposed
in~\cite{ATourOfRS,Ehrenfeucht:2007aa}. Let \(f : \mathbb B^X \to \mathbb B\) be a Boolean function, \(f
    \notin X\), and \(\varphi\) a propositional formula in minimal disjunctive normal
form implementing \(f\): \(\varphi = \bigvee_i C_i\), where \(C_i\) are the
conjunctions appearing in \(\varphi\).  We will use the notation \(pos(C_i)\)
to refer to the set of variables appearing without a negation in
\(C_i\) and \(neg(C_i)\) to refer to the set of variables appearing with
a negation in \(C_i\). Motivated by our biological running example, we  also use  a new variable $\iota_f$ to $f$ that will be used to block the production of $f$ by inhibiting all reactions producing $f$. From the mathematical point of view, these extra inhibitors are not needed for the equivalence between Boolean networks and reaction systems. 

The following set of reactions corresponds to $\varphi$:
\[A_\varphi = \{(pos(C_i), neg(C_i)\cup\{\iota_f\}, \{f\}) \mid C_i \in \varphi\}.\]
The reaction system \(\mathcal A_\varphi = (X \cup \{f,\iota_f\}, A_\varphi)\) simulates \(f\) in the following sense.  Consider any truth assignment \(s : X \to \mathbb B\) and construct the corresponding subset of variables \(W_s = \{x \in X \mid
    s(x) = 1\}\), using \(s\) as a indicator function.  Then \(\res_{\mathcal A}(W_s) = \{f\}\) if and only if \(f(s) = 1\).  Indeed, \(\mathcal A_\varphi\) only
produces \(f\) on \(W_s\) if and only if there exists a conjunction \(C_i
    \in \varphi\) for which \(pos(C_i) \subseteq W_s\) and \(neg(C_i) \cap W_s = \emptyset\).

Consider now the Boolean network \(BN = (f_1, \dots, f_n)\) over the set of
variables \(X = \{x_1, \dots, x_n\}\), in which \(f_i\) is used to update
the variable \(x_i\).  The following set of
reactions is obtained by iterating the idea above for each Boolean function of the network:
\[A_{BN} = \{(pos(C), neg(C)\cup \{\iota_{x_i}\}, \{x_i\}) \mid C \in \varphi_i, 1 \leq i \leq n\},\]
where \(\varphi_i\) denotes the Boolean formula in minimal disjunctive
normal form implementing the update function \(f_i\).  The reaction
system \(\mathcal A_{BN} = (X, A_{BN})\) simulates the \emph{synchronous} model of
dynamics of the Boolean network \(BN\).

\subsection{Controllability of linear dynamical systems}

We introduce here briefly the controllability of linear dynamical systems. For a more detailed presentation we refer to \cite{NetControlTCBB2018}, \cite{Kanhaiya:2017aa}, \cite{NetControlGeneticFrontiers}. We only discuss here a few basic concepts to guide our definitions of the controllability of reaction systems. Intuitively, a linear dynamical system consists of a set of nodes (variables) influencing each other's dynamics through linear, one-source/one-target interactions. They can also be influenced through external, arbitrary interventions. The goal is to be able to change the configuration of the system from any initial state to any final state through a suitable choice of external interventions (that depend on the initial and desired final state). A system having this property is called controllable. A linear dynamical system is always trivially controllable by adding external interventions on all its nodes, that conveniently change its state as desired. The typical question to ask is one of optimisation: given a linear dynamical system, find the minimal set of external interventions making it controllable. We introduce these concepts formally in the following. 

A \emph{linear dynamical system} is a vector $x$ of functions $x:\mathbb{R}\rightarrow\mathbb{R}^n$, $n\geq 1$, defined as the solution of the system of ordinary differential equations 
\[\frac{dx(t)}{dt}=Ax(t),\]
for some fixed initial value $x(0)\in\mathbb{R}^n$. The matrix $A$ defining the dynamical system is an $n\times n$ real-valued matrix. The $(i,j)$ entry of matrix $A$ describes the influence of the $j$-th node of the system over its $i$-th node.

A linear dynamical system can also be influenced through an \emph{external contribution}, thought of as a parametric $m$-dimensional vector $u$ of real functions, influencing the $n$ nodes of the dynamical system through a matrix $B\in\mathbb{R}^{n\times m}$. In this case, the linear dynamical system is defined as the solution of the following system of ordinary differential equations:
\begin{equation}\label{eq-lds}
\frac{dx(t)}{dt}=Ax(t)+Bu(t),
\end{equation}
and it is called the $(A,B)$ linear dynamical system. 

One can also define a subset of so-called \emph{target nodes} of the dynamical system on which the behaviour of the system is observed: $T=\{t_1,t_2,\ldots,t_l\}\subseteq \{1,2,\ldots,n\}$, $1\leq l\leq n$, $t_i<t_j$ for $1\leq i<j\leq n$. The set of target nodes can be defined through its 0/1-valued characteristic matrix $C_T\in\mathbb{R}^{l\times n}$ defined as follows: $C_T(i,j)=1$ if and only if $t_i=j$. Obviously, if $T=\{1,2,\ldots,n\}$, then $C_T$ is the identity matrix. A \emph{targeted linear dynamical system} is defined by a triplet $(A,B,T)$. 

We say that a dynamical system $(A,B)$ is \emph{controllable} if for any $x(0)\in\mathbb{R}^n$ and any $\alpha\in\mathbb{R}^n$, there is an input function vector $u_{x(0),\alpha}:\mathbb{R}\rightarrow\mathbb{R}^n$ such that the solution $x$ of \eqref{eq-lds} satisfies the property $x(\tau)=\alpha$, for some $\tau\geq 0$. 

We say that a targeted dynamical system $(A,B,T)$ is \emph{target controllable} if for any $x(0)\in\mathbb{R}^n$ and any $\gamma\in\mathbb{R}^l$, there is an input function vector $v_{x(0),\alpha}:\mathbb{R}\rightarrow\mathbb{R}^n$ such that the solution $x$ of \eqref{eq-lds} satisfies the property $C_Tx(\tau)=\gamma$, for some $\tau\geq 0$. In other words, the solution eventually matches $\gamma$ on its $T$-components. Obviously, for $T=\{1,2,\ldots,n\}$, target controllability is identical to controllability. 

The (target) controllability problem has an elegant algebraic characterisation known as Kalman's condition. 

\begin{theorem}[Kalman's condition \cite{Kalman1963}]
	A targeted linear dynamical system $(A,B,T)$ is target controllable if and only if its controllability matrix $[C_TB, C_TAB, C_TA^2B,\ldots,C_TA^{n-1}B]$ is of full rank.
\end{theorem}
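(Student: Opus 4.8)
The plan is to prove this target-controllability form of Kalman's condition by combining the explicit solution formula for~\eqref{eq-lds} with an analyticity argument and the Cayley--Hamilton theorem, thereby reducing an infinite-dimensional question about input functions to a finite rank condition. First I would recall that the solution of~\eqref{eq-lds} with $x(0)=x_0$ is $x(t)=e^{At}x_0+\int_0^t e^{A(t-s)}Bu(s)\,ds$. Fixing a horizon $\tau>0$, the target constraint $C_Tx(\tau)=\gamma$ becomes $\int_0^\tau C_Te^{A(\tau-s)}Bu(s)\,ds=\gamma-C_Te^{A\tau}x_0$, and the right-hand side ranges over all of $\mathbb{R}^l$ as $x_0\in\mathbb{R}^n$ and $\gamma\in\mathbb{R}^l$ vary. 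So $(A,B,T)$ is target controllable if and only if the linear map $L_\tau:u\mapsto\int_0^\tau C_Te^{A(\tau-s)}Bu(s)\,ds$, acting on inputs $u:[0,\tau]\to\mathbb{R}^m$, is surjective onto $\mathbb{R}^l$; I write $\mathcal R_\tau=\operatorname{im}L_\tau$ for its image.

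The core step is to compute the orthogonal complement $\mathcal R_\tau^{\perp}\subseteq\mathbb{R}^l$. A vector $v$ lies in $\mathcal R_\tau^{\perp}$ if and only if $v^{\mathsf T}C_Te^{A(\tau-s)}B=0$ for all $s\in[0,\tau]$: the ``if'' part is immediate, and for the ``only if'' part one feeds in the probing input $u(s)=\bigl(v^{\mathsf T}C_Te^{A(\tau-s)}B\bigr)^{\mathsf T}$, obtaining $\int_0^\tau\bigl\|v^{\mathsf T}C_Te^{A(\tau-s)}B\bigr\|^2\,ds=0$ and hence, by continuity, the identical vanishing of the integrand. Substituting $t=\tau-s$ turns this into $v^{\mathsf T}C_Te^{At}B=0$ on $[0,\tau]$. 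Since $t\mapsto v^{\mathsf T}C_Te^{At}B$ is real-analytic, it vanishes on an interval precisely when all of its derivatives at $t=0$ vanish, i.e. $v^{\mathsf T}C_TA^kB=0$ for every $k\ge0$; and by Cayley--Hamilton each $A^k$ with $k\ge n$ is a linear combination of $I,A,\dots,A^{n-1}$, so the condition collapses to $v^{\mathsf T}C_TA^kB=0$ for $k=0,1,\dots,n-1$, that is, $v^{\mathsf T}[\,C_TB,\ C_TAB,\ \dots,\ C_TA^{n-1}B\,]=0$.

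Hence $\mathcal R_\tau^{\perp}$ is exactly the left null space of the controllability matrix $K=[\,C_TB,\ C_TAB,\ \dots,\ C_TA^{n-1}B\,]$ (an $l\times nm$ matrix); in particular $\mathcal R_\tau^{\perp}$, and therefore $\mathcal R_\tau$, does not depend on $\tau>0$. Consequently target controllability --- which by the definition only asks that every $\gamma$ be reached at some time --- is equivalent to $\mathcal R_\tau=\mathbb{R}^l$, i.e. to $\mathcal R_\tau^{\perp}=\{0\}$, i.e. to $K$ having trivial left null space, i.e. to $\operatorname{rank}K=l$, which is full rank because $l\le n\le nm$. Specialising to $T=\{1,\dots,n\}$, so that $C_T=I_n$, recovers the classical Kalman rank condition. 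I expect the main obstacle to be the core step of the second paragraph: choosing the right probing input so as to pin down $\mathcal R_\tau^{\perp}$, and then invoking real-analyticity to pass from a constraint holding on a whole time interval to finitely many linear equations; by comparison the Cayley--Hamilton reduction that yields the exponent bound $n-1$ is routine.
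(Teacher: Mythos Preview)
Your proof is correct and follows the standard textbook route to Kalman's rank condition: variation-of-constants formula, identification of the reachable set via its orthogonal complement using a probing input, real-analyticity to pass from vanishing on an interval to vanishing of all Taylor coefficients, and Cayley--Hamilton to truncate at order $n-1$. The argument is clean and the handling of the time horizon (showing $\mathcal R_\tau$ is independent of $\tau>0$, so the ``for some $\tau$'' in the definition collapses to ``for every $\tau>0$'') is the right way to close the loop.

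There is nothing to compare against here: the paper does not prove this theorem. It is quoted as a classical result from~\cite{Kalman1963}, included only to motivate the analogous controllability questions for reaction systems, and no proof or proof sketch is given in the paper. Your write-up is essentially the standard proof one finds in linear systems texts.
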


The controllability of linear dynamical system has found in the last few years many applications in biology and medicine \cite{art-lsb11}, \cite{Gao:2014aa}, \cite{Guo:2017aa}, \cite{art-kcgp17}, \cite{art-cgkkp18}. In this context the linear dynamical system is an interaction (e.g., signalling) network describing the biological process of interest, and the input is in terms of available drugs or small inhibitors. The difficulty with this application of the concept is that the system is only partially defined, with the majority of the interactions impossible to measure, and thus with the matrices $A$ and $B$ only partially defined. The solution is a structural formulation of the problem, where controllability is defined in terms of the interaction network and not in terms of their precise strength, see, e.g., \cite{art-cgkkp18}. Also, the problem in this context is often given only through the interaction network (the equivalent of matrix $A$ above). The goal is to identify a suitable set of input nodes making the system controllable, i.e., given matrix $A$, the problem is to identify a suitable matrix $B$ such that the linear dynamical system $(A,B)$ is controllable. Furthermore, for applications in medicine, with the input being thought of as drugs delivered to a patient, there are various conditions imposed on matrix $B$, such as having a minimal number of columns (corresponding to minimising the number of drugs), or having the non-zero entries of $B$ only on certain rows (corresponding to selecting the drugs only from a certain set, e.g., FDA-approved drugs, or disease-specific standard drugs). Some examples on applying the controllability problem in medicine are in \cite{art-lsb11}, \cite{art-cgkkp18}, \cite{art-kcgp17}, \cite{MM3networks2020} . Software for solving the controllability problem is available in \cite{netcontrol4biomed}, \cite{NetControlGeneticFrontiers}.

\section{Running example: a reaction system model for breast cancer dynamics}\label{sec:example}
Our running example is a reaction system modelling oncogenic signalling, with a focus on the receptor tyrosine kinase (RTK) signalling network and on the occurrence of uncontrolled proliferation. We follow the Boolean network model proposed in \cite{G.-T.-Zanudo:2018aa} and give it a correspondent in terms of reaction systems.

\begin{figure}
\begin{center}
\begin{tabular}{cc}
	\includegraphics[scale=0.35]{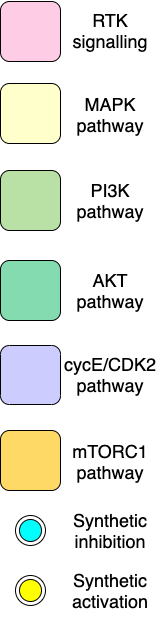}	\ \ \ \ \ \ \
	&
	\includegraphics[scale=0.35]{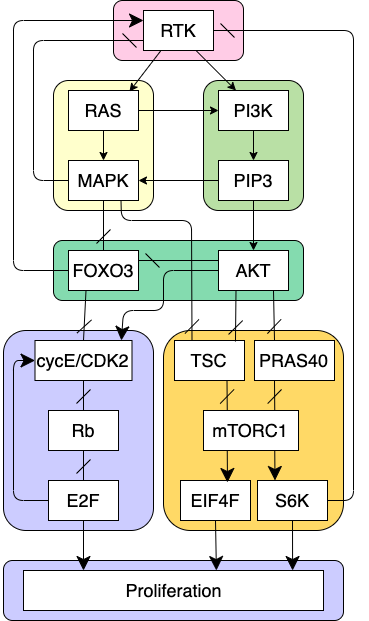}	\\
	 \multicolumn{2}{c}{(a)}
\end{tabular}
\begin{tabular}{ccc}
	\includegraphics[scale=0.27]{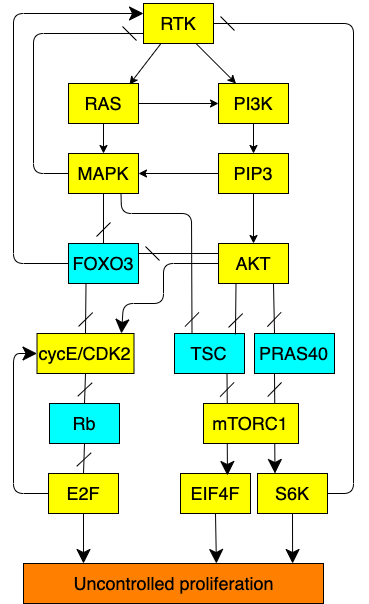}	
	&	
	\includegraphics[scale=0.27]{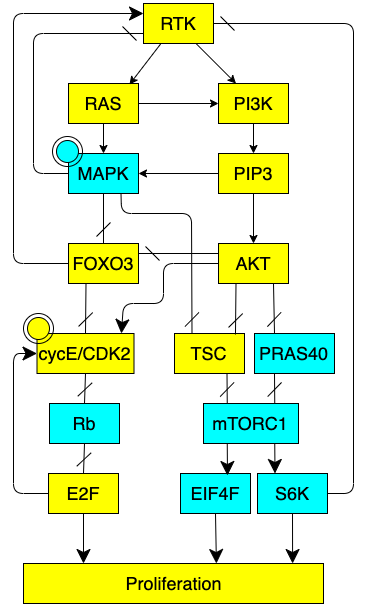}	
	&
	\includegraphics[scale=0.27]{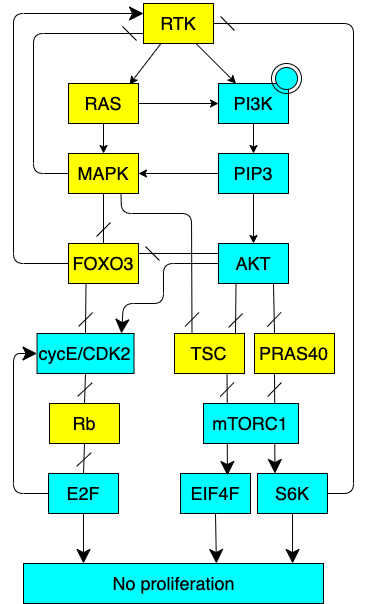}	
	\\
		(b) & (c) & (d) 	
\end{tabular}

\end{center}
	\caption{(a) The signal transduction network of \cite{G.-T.-Zanudo:2018aa} characterising cell proliferation. The $\MAPK$ pathway is in yellow, the $\PIP3$ pathway is in green, the $\AKT$-pathway is in dark green, the $\mTORC1$-pathway is in orange, and the $\cycE$-pathway is in blue. (b) - (d) Network configurations leading to (b) uncontrolled proliferation, (c)  proliferation, and (d) no proliferation. In (b)-(d) a rectangle with blue background denotes an inactive component, and one with yellow denotes an active component. 
	}\label{fig-boolnet}
\end{figure}

The model of \cite{G.-T.-Zanudo:2018aa} is grounded in the context of the breast cancer model of \cite{Gomez-Tejeda-Zanudo:2017aa}. It consists of a simplified version of the RTK signalling network through the MAPK, PI3K, AKT, and mTORC1 pathways, including the cross-talk between them, and several feedback loops. The network includes both oncogene proteins (RAS, PI3K, mTOR), as well as tumour suppressors (Rb, FOXO3). The model is illustrated in Figure \ref{fig-boolnet}(a) (the growth factors $\GF$ are not included in the figure). Each of the key proteins in Figure \ref{fig-boolnet} has a correspondent variable in the Boolean network model, with the update Boolean functions described in Table \ref{table-boolnet}. Variable $X$ gets value $0$ if its correspondent protein is inactive and $1$ if it is active. The outcome of the model is described by variable $\Pro$ which, as an exception, is ternary-valued, with $0$ standing for non-proliferation, $1$ for proliferation, and $2$ for uncontrolled proliferation. Depending on the signals reaching the output node, the model is in one of these three proliferation modes, as described by the update function for variable $\Pro$, $f_{\Pro}=\E2F + (\EIF4F \land \S6K)$. Because of the non-binary nature of this function, we replace the modelling of the proliferation status with two different variables, $\Pro$ that models with true/false the proliferation/non-proliferation status and $\uPro$ that models with true/false the uncontrolled proliferation status.  Their Boolean functions are defined as follows:

\vspace{-4mm}
\begin{align*}
  f_{\Pro}&=(\E2F\land \lnot\EIF4F) \lor (\E2F\land \lnot \S6K) \lor (\lnot \E2F \land \EIF4F \land \S6K), \\
  f_{\uPro}&=\E2F\land \EIF4F \land\S6K.
\end{align*}
\vspace{-4mm}

\begin{longtable}{|p{0.9\textwidth}|}

\endfirsthead

\multicolumn{1}{c}%
{{\bfseries \tablename\ \thetable{} -- continued from previous page}} \\ \hline 
\endhead

\hline \multicolumn{1}{r}{Continued on the next page} \endfoot

\endlastfoot

\caption{The Boolean network model of \cite{G.-T.-Zanudo:2018aa} for oncogenic signalling in disjunctive normal form.}	\label{table-boolnet}\\

\hline
$f_{\RTK}\,=\,(\GF \land \FOXO3) \lor (\GF\land \lnot \S6K \land \lnot \MAPK)$ \\
$f_{\RAS}\,=\,\RTK$\\
$f_{\MAPK}\,=\,\RAS \lor \PIP3$\\
$f_{\PI3K}\,=\,\RTK \lor \RAS$\\
$f_{\PIP3}\,=\,\PI3K$\\
$f_{\FOXO3}\,=\,\lnot\AKT \lor \lnot\MAPK$\\
$f_{\AKT}\,=\,\PIP3$\\
$f_{\cycE}\,=\,(\AKT\land\lnot\FOXO3)\lor\E2F$\\
$f_{\Rb}\,=\,\lnot\cycE$\\
$f_{\E2F}\,=\,\lnot\Rb$\\
$f_{\TSC}\,=\,\lnot\MAPK \lor \lnot\AKT$\\
$f_{\PRAS40}\,=\,\lnot\AKT$\\
$f_{\mTORC1}\,=\,\lnot\TSC \land \lnot\PRAS40$\\
$f_{\EIF4F}\,=\,{\mTORC1}$\\
$f_{\S6K}\,=\,{\mTORC1}$\\
$f_{\Pro}\,=\,(\E2F\land \lnot\EIF4F) \lor (\E2F\land \lnot \S6K) \lor (\lnot \E2F \land \EIF4F \land \S6K)$\\
$f_{\uPro}\,=\,\E2F\land \EIF4F \land\S6K$\\
\hline

\end{longtable}

This model can reach a non-proliferation, a proliferation, and an uncontrolled proliferation status, depending on the different activation statuses of the MAPK-, PI3K-, AKT-, and mTORC1-pathways, see \cite{G.-T.-Zanudo:2018aa} and Figure \ref{fig-boolnet}(b)-(d) on page~\pageref{fig-boolnet}.

The reaction system associated to the Boolean network in Table~\ref{table-boolnet} has as its background set all the variables of the Boolean network and an inhibitor variable associated to each of them: $S=\{\GF$, $\RTK$, $\iota_{\RTK}$, $\RAS$, $\iota_{\RAS}$, $\MAPK$, $\iota_{\MAPK}$, $\PI3K$, $\iota_{\PI3K}$, $\PIP3$, $\iota_{\PIP3}$, $\FOXO3$, $\iota_{\FOXO3}$, $\AKT$, $\iota_{\AKT}$, $\cycE$, $\iota_{\cycE}$, $\Rb$, $\iota_{\Rb}$, $\E2F$, $\iota_{\E2F}$, $\TSC$, $\iota_{\TSC}$, $\PRAS40$, $\iota_{\PRAS40}$, $\mTORC1$, $\iota_{\mTORC1}$, $\EIF4F$, $\iota_{\EIF4F}$, $\S6K$, $\iota_{\S6K}$, $\Pro$, $\iota_{\Pro}$, $\uPro$, $\iota_{\uPro}\}$. The set of reactions is listed in Table~\ref{table-rsmodel}.

In the reaction system model, the equivalent of a component/signal $X$, $X\in\{\GF$, $\RTK$, $\RAS$, $\MAPK$, $\PI3K$, $\PIP3$,  $\FOXO3$, $\AKT$,  $\cycE$, $\Rb$, $\E2F$,  $\TSC$,  $\PRAS40$, $\mTORC1$, $\EIF4F$, $\S6K\}$ being active/inactive is having (not having, resp.) $X$ in the current state. Synthetically activating/inactivating $X$ corresponds to the context adding to the current state $X$ ($\iota_X$, resp.) A non-proliferation configuration corresponds a state that does not include either symbol $\Pro$ or $\uPro$. A proliferation/uncontrolled proliferation configuration corresponds to a state including $\Pro$ ($\uPro$, resp.)

\begin{longtable}{|p{0.9\textwidth}|}

\endfirsthead

\multicolumn{1}{c}%
{{\bfseries \tablename\ \thetable{} -- continued from previous page}} \\ \hline 
\endhead

\hline \multicolumn{1}{r}{Continued on the next page} \endfoot

\endlastfoot

\caption{A reaction systems model for oncogenic signalling.}	\label{table-rsmodel}\\

\hline 
$r_{\RTK}^{(1)}\,=\,(\{\GF, \FOXO3\},\{\iota_{\RTK}\},\{\RTK\})$\\
$r_{\RTK}^{(2)}\,=\,(\{\GF\},\{\S6K, \MAPK, \iota_{\RTK}\},\{\RTK\})$\\
$r_{\RAS}\,=\,(\{\RTK\},\{\iota_{\RAS}\},\{\RAS\})$\\
$r_{\MAPK}^{(1)}\,=\,(\{\RAS\},\{\iota_{\MAPK}\},\{\MAPK\})$\\
$r_{\MAPK}^{(2)}\,=\,(\{\PIP3\},\{\iota_{\MAPK}\},\{\MAPK\})$\\
$r_{\PI3K}^{(1)}\,=\,(\{\RTK \},\{\iota_{\PI3K}\},\{\PI3K\})$\\
$r_{\PI3K}^{(2)}\,=\,(\{\RAS \},\{\iota_{\PI3K}\},\{\PI3K\})$\\
$r_{\PIP3}\,=\,(\{\PI3K \},\{\iota_{\PIP3}\},\{\PIP3\})$\\
$r_{\FOXO3}^{(1)}\,=\,(\emptyset,\{\AKT, \iota_{\FOXO3}\},\{\FOXO3 \})$\\
$r_{\FOXO3}^{(2)}\,=\,(\emptyset,\{\MAPK, \iota_{\FOXO3}\},\{\FOXO3 \})$\\
$r_{\AKT}\,=\,(\{\PIP3 \},\{\iota_{\AKT}\},\{\AKT\})$\\
$r_{\cycE}^{(1)}\,=\,(\{\AKT \},\{\FOXO3, \iota_{\cycE}\},\{\cycE\})$\\
$r_{\cycE}^{(2)}\,=\,(\{\E2F \},\{\ \iota_{\cycE}\},\{\cycE\})$\\
$r_{\Rb}\,=\,(\emptyset,\{\cycE, \iota_{\Rb}\},\{\Rb\})$\\
$r_{\E2F}\,=\,(\emptyset,\{\Rb, \iota_{\E2F}\},\{\E2F\})$\\
$r_{\TSC}^{(1)}\,=\,(\emptyset,\{\MAPK, \iota_{\TSC}\},\{\TSC\})$\\
$r_{\TSC}^{(2)}\,=\,(\emptyset,\{\AKT, \iota_{\TSC}\},\{\TSC\})$\\
$r_{\PRAS40}\,=\,(\emptyset,\{\AKT, \iota_{\PRAS40}\},\{\PRAS40\})$\\
$r_{\mTORC1}\,=\,(\emptyset,\{\TSC, \PRAS40, \iota_{\mTORC1}\},\{\mTORC1\})$\\
$r_{\EIF4F}\,=\,(\{\mTORC1\},\{\iota_{\EIF4F}\},\{\EIF4F\})$\\
$r_{\S6K}\,=\,(\{\mTORC1\},\{\iota_{\S6K}\},\{\S6K\})$\\
$r_{\Pro}^{(1)}\,=\,(\{\E2F \},\{\EIF4F, \iota_{\Pro}\},\{\Pro\})$\\
$r_{\Pro}^{(2)}\,=\,(\{\E2F \},\{\S6K, \iota_{\Pro}\},\{\Pro\})$\\
$r_{\Pro}^{(3)}\,=\,(\{\EIF4F, \S6K \},\{\E2F, \iota_{\Pro}\},\{\Pro\})$\\
$r_{\uPro}\,=\,(\{\E2F,\EIF4F, \S6K \},\{\iota_{\uPro}\},\{\uPro\})$\\
\hline
\end{longtable}

Table~\ref{t-RS-IP} shows an interactive process of our reaction systems model, running with the constant context $\{\GF\}$, and oscillating between proliferation and uncontrolled proliferation.

\newcommand{\iplineskip}{3mm}
\begin{longtable}{p{0.12\textwidth}|p{0.25\textwidth}|p{0.25\textwidth}|p{0.25\textwidth}}

\endfirsthead

\multicolumn{4}{c}%
{{\bfseries \tablename\ \thetable{} -- continued from previous page}} \\ \hline 
\endhead

\hline \multicolumn{4}{r}{Continued on the next page} \endfoot

\hline \hline
\endlastfoot

\caption{An interactive process of the reaction systems model for oncogenic signalling.  
}	
\label{t-RS-IP}\\

\hline \hline
\textbf{Context}	&	$\GF$&	$\GF$&	$\GF$\\
\hline 
\textbf{State}		&	$S_1=\{\RTK$, $\FOXO3$, $\Rb$, $\E2F$, $\TSC$, $\PRAS40$, $\mTORC1\}$ & $S_2=\{\RTK$, $\RAS$, $\PI3K$, $\FOXO3$, $\cycE$, $\Rb$, $\TSC$, $\PRAS40$, $\EIF4F$, $\S6K$, $\Pro\}$ & $S_3=\{\RTK$, $\RAS$, $\MAPK$, $\PI3K$, $\PIP3$, $\FOXO3$, $\TSC$, $\PRAS40$, $\Pro\}$\\
\hline 
\textbf{Status}		&	\cellcolor{cyan}\emph{No proliferation}		&	\cellcolor{yellow}\emph{Proliferation}		&	\cellcolor{yellow}\emph{Proliferation}\\

\multicolumn{4}{c}{}\\

\textbf{Context}	&	$\GF$&	$\GF$&	$\GF$	\\
\hline 
\textbf{State}		&	$S_4=\{\RTK$, $\RAS$, $\MAPK$, $\PI3K$, $\PIP3$, $\FOXO3$, $\AKT$, $\Rb$, $\E2F$, $\TSC$, $\PRAS40\}$		&	$S_5=\{\RTK$, $\RAS$, $\MAPK$, $\PI3K$, $\PIP3$, $\AKT$, $\cycE$, $\Rb$, $\Pro\}$	 &	$S_6=\{\RAS$, $\MAPK$, $\PI3K$, $\PIP3$, $\AKT$, $\cycE$, $\mTORC1\}$\\
\hline 
\textbf{Status}		&	\cellcolor{cyan}\emph{No proliferation}		&	\cellcolor{yellow}\emph{Proliferation}		&	\cellcolor{cyan}\emph{No proliferation}	\\


\textbf{Context}	&	$\GF$&	$\GF$&	$\GF$	\\
\hline 
\textbf{State}		&	$S_7=\{\MAPK$, $\PI3K$, $\PIP3$, $\AKT$, $\cycE$, $\E2F$, $\mTORC1$, $\EIF4F$, $\S6K\}$	&	$S_8=\{\MAPK$, $\PIP3$, $\AKT$, $\cycE$, $\E2F$, $\mTORC1$, $\EIF4F$, $\S6K$, $\uPro\}$	&	$S_9=\{\MAPK$, $\AKT$, $\cycE$, $\E2F$, $\mTORC1$, $\EIF4F$, $\S6K$, $\uPro\}$\\
\hline 
\textbf{Status}		&	\cellcolor{cyan}\emph{No proliferation}		&	\cellcolor{orange}\emph{Uncontr. prolif.}		&	\cellcolor{orange}\emph{Uncontr. prolif.}		\\

\multicolumn{4}{c}{}\\[\iplineskip]

\textbf{Context}	&	$\GF$&	$\GF$&	$\GF$	\\
\hline 
\textbf{State}		&	$S_{10}=\{\cycE$, $\E2F$, $\mTORC1$, $\EIF4F$, $\S6K$, $\uPro\}$		&	$S_{11}=\{\FOXO3$, $\cycE$, $\E2F$, $\TSC$, $\PRAS40$, $\mTORC1$, $\EIF4F$, $\S6K$, $\uPro\}$ 	&	$S_{12}=\{\RTK$, $\FOXO3$, $\cycE$, $\E2F$, $\TSC$, $\PRAS40$, $\EIF4F$, $\S6K$, $\uPro\}$\\
\hline 
\textbf{Status}		&	\cellcolor{orange}\emph{Uncontr. prolif.}		&	\cellcolor{orange}\emph{Uncontr. prolif.}		&	\cellcolor{orange}\emph{Uncontr. prolif.}		\\
 
\multicolumn{4}{c}{}\\[\iplineskip]

\textbf{Context}	&	$\GF$&	$\GF$&	$\GF$	\\
\hline 
\textbf{State}		&	$S_{13}=\{\RTK$, $\RAS$, $\PI3K$, $\FOXO3$, $\cycE$, $\E2F$, $\TSC$, $\PRAS40$, $\uPro\}$	&	$S_{14}=\{\RTK$, $\RAS$, $\MAPK$, $\PI3K$, $\PIP3$, $\FOXO3$, $\cycE$, $\E2F$, $\TSC$, $\PRAS40$, $\Pro\}$	&	$S_{15}=\{\RTK$, $\RAS$, $\MAPK$, $\PI3K$, $\PIP3$, $\FOXO3$, $\AKT$, $\cycE$, $\E2F$, $\TSC$, $\PRAS40$, $\Pro\}$\\
\hline 
\textbf{Status}		&	\cellcolor{orange}\emph{Uncontr. prolif.}		&	\cellcolor{yellow}\emph{Proliferation}		&	\cellcolor{yellow}\emph{Proliferation}		\\

\multicolumn{4}{c}{}\\[\iplineskip]

\textbf{Context}	&	$\GF$&	$\GF$&	$\GF$	\\
\hline 
\textbf{State}		&	$S_{16}=\{\RTK$, $\RAS$, $\MAPK$, $\PI3K$, $\PIP3$, $\AKT$, $\cycE$, $\E2F$, $\Pro\}$		&	$S_{17}=\{\RAS$, $\MAPK$, $\PI3K$, $\PIP3$, $\AKT$, $\cycE$, $\E2F$, $\mTORC1$, $\Pro\}$		&	$S_{18}=\{\MAPK$, $\PI3K$, $\PIP3$, $\AKT$, $\cycE$, $\E2F$, $\mTORC1$, $\EIF4F$, $\S6K$, $\Pro\}$\\
\hline 
\textbf{Status}		&	\cellcolor{yellow}\emph{Proliferation}		&	\cellcolor{yellow}\emph{Proliferation}		&	\cellcolor{yellow}\emph{Proliferation}		\\

\multicolumn{4}{c}{}\\[\iplineskip]

\textbf{Context}	&	$\GF$ & \hspace{11mm} {\Large $\cdots$} & \hspace{11mm} {\Large $\cdots$}\\
\hline 
\textbf{State}		&	$S_{19}=\{\MAPK$, $\PIP3$, $\AKT$, $\cycE$, $\E2F$, $\mTORC1$, $\EIF4F$, $\S6K$, $\uPro\} = S_8$  &  \phantom{aaa aaa aaa aaa aaa}  \phantom{aaa aaa aaa aaa aaa}  \phantom{aaa aa} {\huge $\dots$}   &  \phantom{aaa aaa aaa aaa aaa}  \phantom{aaa aaa aaa aaa aaa}  \phantom{aaa aa} {\huge $\dots$} \\
\hline 
\textbf{Status}		&	\cellcolor{orange}\emph{Uncontr. prolif.}   &  \hspace{11mm} {\Large $\cdots$}  & \hspace{11mm} {\Large $\cdots$} \\
\end{longtable}

\section{Controllability of reaction systems}
\label{sec:control}

The basic concept of controllability of a reaction system \(\mathcal{A} = (S, A)\) is that for any \(X, Y \subseteq S\), \(X \neq Y\), there is an interactive process of \(\mathcal{A}\) starting in $X$ and ultimately reaching $Y$. More exactly, the controllability problem \((\mathcal{A}, X, Y)\) is finding a context sequence \(\mathcal{C} = (C_i)_{0 \leq i \leq n}\) for \(\mathcal{A}\) such that the interactive process generated by \(\mathcal{C}\) starts in $X$ and ends in \(Y\).  $\mathcal{A}$ is said to be controllable if the controllability problem $(\mathcal A, X, Y)$ has a solution for any pair $X, Y \subseteq S$, $X \neq Y$.

Similarly as in the case of linear dynamical systems, in the absence of constraints on the context sequences, the controllability problem has a trivial solution: for any $X,Y\subseteq S$, the control sequence \(\mathcal{\gamma}_{X \to Y} = X, S, Y\) leads to an interactive process starting in $X$, going to the empty state (since all reactions are disabled through getting the full set $S$ as a context), and then going to $Y$:
\begin{center}
\begin{tabular}{|p{0.15\textwidth}||p{0.2\textwidth}|p{0.2\textwidth}|p{0.2\textwidth}|} 
\hline \textbf{Context}  & $X$ & $S$ & $Y$  \\
\textbf{Result}  & $\emptyset$ & $\res_A(X)$ & $\emptyset=\res_A(S)$ \\
\hline \textbf{State} & $X$ & $S$ & $Y$ \\
\hline 
\end{tabular}
\end{center}

Instead, we define controllability of reaction systems as follows.

\begin{definition}
	Let $\mathcal{A}=(S,A)$ be a reaction system. 
	\begin{itemize}
	\item 	For some nonnegative integer $n$ with $0\leq n<|S|$, we say that $\mathcal{A}$ is $n$-controllable if for any $X,Y\subseteq S$, with $Y$ reachable in $\mathcal{A}$, there is a context sequence consisting of sets of cardinality at most $n$ generating an interactive process starting in $X$ and ending in $Y$.
	
	\item For some $I\subseteq S$, we say that $\mathcal{A}$ is $I$-controllable if for any $X,Y\subseteq S$, with $Y$ reachable in $\mathcal{A}$, there is a context sequence consisting of subsets of $I$ generating an interactive process starting in $X$ and ending in $Y$.
	\end{itemize}
\end{definition}


We define two versions of the controllability problem for reaction systems as follows. Let $\mathcal{A}=(S,A)$ be a reaction system.

\begin{description}
	\item [Problem $\mathcal{C}(\mathcal{A},n)$] For some nonnegative integer $n$ with $0\leq n<|S|$, decide if $\mathcal{A}$ is $n$-controllable. Also, find the smallest such $n$.
	
	\item [Problem $\mathcal{C}(\mathcal{A},I)$] For some $I\subseteq S$, decide if $\mathcal{A}$ is $I$-controllable. Also, find a minimal (with respect to inclusion) such set $I$.
\end{description}


\begin{example}
	The interactive process in Table \ref{t-control-ex} is an example on how to induce a change of state from the attractor state $S_{18}$ (with uncontrolled proliferation) in Table \ref{t-RS-IP} to state $\{\RTK$, $\RAS$, $\MAPK$, $\FOXO3$, $\cycE$, $\E2F$, $\TSC$, $\PRAS40$, $\Pro\}$ (with Proliferation). We use just one additional symbol in the context sequence, $\iota_{\PI3K}$, inhibiting all reactions producing $\PI3K$.

\vspace{-1mm}

\begin{longtable}{p{0.12\textwidth}|p{0.25\textwidth}|p{0.25\textwidth}|p{0.25\textwidth}}

\endfirsthead

\multicolumn{4}{c}%
{{\bfseries \tablename\ \thetable{} -- continued from previous page}} \\ \hline 
\endhead

\hline \multicolumn{4}{r}{Continued on the next page} \endfoot

\hline \hline
\endlastfoot

\caption{An interactive process of the reaction systems model for oncogenic signalling switching from an uncontrolled proliferation status to a proliferation status.}	\label{t-control-ex}\\

\hline \hline
\textbf{Context}	&	\cellcolor{lightgray}$\{\GF\}$&	$\{\GF, \iota_{\PI3K}\}$&	$\{\GF, \iota_{\PI3K}\}$\\
\hline 
\textbf{State}		&	\cellcolor{lightgray}$X_0 = S_{19}=\{\MAPK$, $\PIP3$, $\AKT$, $\cycE$, $\E2F$, $\mTORC1$, $\EIF4F$, $\S6K$, $\uPro\}$ & $X_{1}=\{\MAPK$, $\AKT$, $\cycE$, $\E2F$, $\mTORC1$, $\EIF4F$, $\S6K$, $\uPro\}$ & $X_{2}=\{\cycE$, $\E2F$, $\mTORC1$, $\EIF4F$, $\S6K$, $\uPro\}$\\
\hline 
\textbf{Status}		&	\cellcolor{lightgray!60!orange!50}\emph{Uncontr. prolif.}		&	\cellcolor{orange}\emph{Uncontr. prolif.}		&	\cellcolor{orange}\emph{Uncontr. prolif.}\\

\multicolumn{4}{c}{}\\

\textbf{Context}	&	$\{\GF, \iota_{\PI3K}\}$&	$\{\GF, \iota_{\PI3K}\}$&	$\{\GF, \iota_{\PI3K}\}$	\\
\hline 
\textbf{State}		&	$X_{3}=\{\FOXO3$, $\cycE$, $\E2F$, $\TSC$, $\PRAS40$, $\mTORC1$, $\EIF4F$, $\S6K$, $\uPro\}$		&	$X_{4}=\{\RTK$, $\FOXO3$, $\cycE$, $\E2F$, $\TSC$, $\PRAS40$, $\EIF4F$, $\S6K$, $\uPro\}$	 &	$X_{5}=\{\RTK$, $\RAS$, $\FOXO3$, $\cycE$, $\E2F$, $\TSC$, $\PRAS40$, $\uPro\}$\\
\hline 
\textbf{Status}		&	\cellcolor{orange}\emph{Uncontr. prolif.}		&	\cellcolor{orange}\emph{Uncontr. prolif.}			&	\cellcolor{orange}\emph{Uncontr. prolif.} \\

\multicolumn{4}{c}{}\\

\textbf{Context}	&	$\{\GF, \iota_{\PI3K}\}$&	$\{\GF, \iota_{\PI3K}\}$	\\
\hline 
\textbf{State}		&	$X_{6}=\{\RTK$, $\RAS$, $\MAPK$, $\FOXO3$, $\cycE$, $\E2F$, $\TSC$, $\PRAS40$, $\Pro\}$		&	$X_{7} = X_6$ 	\\
\hline 
\textbf{Status}		&	\cellcolor{yellow}\emph{Proliferation}		&	\cellcolor{yellow}\emph{Proliferation}				
\end{longtable}
\end{example}

\vspace{-7mm}

\section{Target controllability of reaction systems}
\label{sec:target_control}

The concept of target controllability of reaction systems is focused on a given set of target nodes $T\subseteq S$ of a reaction system $\mathcal{A}=(S,A)$.  The objective in this problem is to be able, through a suitable context sequence, to transition between any two subsets of $T$. The caveat here is that, with the focus set on $T$ only, elements from $S\setminus T$ may be present arbitrarily in the interactive process, including in the initial and final states. We formalise this concept as follows.

\begin{definition}
	Let $\mathcal{A}=(S,A)$ be a reaction system and $T\subseteq S$ the set of targets. 
	\begin{itemize}
	\item 	For some nonnegative integer $n$ with $0\leq n<|S|$, we say that $(\mathcal{A},T)$ is $n$-target controllable if for any $X,Y\subseteq T$, with $Y$ or a superset of it reachable in $\mathcal{A}$, there is a context sequence consisting of sets of cardinality at most $n$ generating an interactive process starting in a state $W_0$ with $W_0\cap T=X$ and ending in a state $W_r$ with $W_r\cap T=Y$, for some $r\geq 0$.
	
	\item For some $I\subseteq S$, we say that $(\mathcal{A},T)$ is $I$-target controllable if for any $X,Y\subseteq T$, with $Y$ or a superset of it reachable in $\mathcal{A}$, there is a context sequence consisting of subsets of $I$ generating an interactive process starting in a state $W_0$ with $W_0\cap T=X$ and ending in a state $W_r$ with $W_r\cap T=Y$, for some $r\geq 0$.
	\end{itemize}
\end{definition}

We define two versions of the target controllability problem for reaction systems as follows. Let $\mathcal{A}=(S,A)$ be a reaction system and $T\subseteq S$ the set of targets.

\begin{description}
	\item [Problem $\mathcal{TC}(\mathcal{A},T,n)$] For some nonnegative integer $n$ with $0\leq n<|S|$, decide if $(\mathcal{A},T)$ is $n$-target controllable. Also, find the smallest such $n$.
	
	\item [Problem $\mathcal{TC}(\mathcal{A},T,I)$] For some $I\subseteq S$, decide if $(\mathcal{A},T)$ is $I$-target controllable. Also, find a minimal (with respect to inclusion) such set $I$.
\end{description}


\begin{example}
	A natural target of our running example is the proliferation symbol. Consider for example driving the model away from the state $S_{18}$ in Table~\ref{t-RS-IP}, characterised by uncontrolled proliferation, and into a steady state or an attractor consisting of states characterised by proliferation or no proliferation. There are several ways of doing that with constant context sequences that still include the $\GF$ symbol, including the following options:
	\begin{itemize}
		\item the context set $\{\GF, \PRAS40\}$ drives the model into an attractor consisting of 10 states, all of them with $\Pro$;
		\item the context set $\{\GF, \iota_{\cycE}\}$ drives the model into an attractor consisting of 11 states, 6 with $\Pro$, and 5 without;
		\item the context set $\{\GF, \iota_{\cycE}, \PRAS40\}$ drives the model into an attractor consisting of 10 states, all of them with no proliferation;
		\item the context set $\{\GF, \iota_{\PI3K}, \iota_{\cycE}\}$ drives the model into a steady state with no proliferation, with the interactive process shown in Table \ref{t-target-ex}.
	\end{itemize}

\vspace{-2mm}
	
\begin{longtable}{p{0.12\textwidth}|p{0.25\textwidth}|p{0.25\textwidth}|p{0.25\textwidth}}

\endfirsthead

\multicolumn{4}{c}%
{{\bfseries \tablename\ \thetable{} -- continued from previous page}} \\ \hline 
\endhead

\hline \multicolumn{4}{r}{Continued on the next page} \endfoot

\hline \hline
\endlastfoot

\caption{An interactive process of the reaction systems model for oncogenic signalling switching from uncontrolled proliferation to a steady state with no proliferation.}	\label{t-target-ex}\\

\hline \hline
\textbf{Context}	&	\cellcolor{lightgray}$\{\GF\}$ &	$\{\GF, \iota_{\PI3K}$, $\iota_{\cycE}\}$&	$\{\GF, \iota_{\PI3K}$, $\iota_{\cycE}\}$\\
\hline 
\textbf{State}		&	\cellcolor{lightgray}$Y_0 = S_{19}=\{\MAPK$, $\PIP3$, $\AKT$, $\cycE$, $\E2F$, $\mTORC1$, $\EIF4F$, $\S6K$, $\uPro\}$ &  $Y_{1}=\{\MAPK$, $\AKT$, $\E2F$, $\mTORC1$, $\EIF4F$, $\S6K$, $\uPro\}$ & $Y_{2}=\{\Rb$, $\E2F$, $\mTORC1$, $\EIF4F$, $\S6K$, $\uPro\}$\\
\hline 
\textbf{Status}		&	\cellcolor{lightgray!60!orange!50}\emph{Uncontr. prolif.}		&	\cellcolor{orange}\emph{Uncontr. prolif.}		&	\cellcolor{orange}\emph{Uncontr. prolif.}\\

\multicolumn{4}{c}{}\\

\textbf{Context}	&	$\{\GF, \iota_{\PI3K}$, $\iota_{\cycE}\}$&	$\{\GF, \iota_{\PI3K}$, $\iota_{\cycE}\}$	& $\{\GF, \iota_{\PI3K}$, $\iota_{\cycE}\}$\\
\hline 
\textbf{State}		&	$Y_{3}=\{\FOXO3$, $\Rb$, $\TSC$, $\PRAS40$, $\mTORC1$, $\EIF4F$, $\S6K$, $\uPro\}$		&	$Y_{4}=\{\RTK$, $\FOXO3$, $\Rb$, $\TSC$, $\PRAS40$, $\EIF4F$, $\S6K$, $\Pro\}$		&	$Y_{5}=\{\RTK$, $\RAS$, $\FOXO3$, $\Rb$, $\TSC$, $\PRAS40$, $\Pro\}$ \\
\hline 
\textbf{Status}		&	\cellcolor{orange}\emph{Uncontr. prolif.}		&	\cellcolor{yellow}\emph{Proliferation}		&	\cellcolor{yellow}\emph{Proliferation}  \\

\multicolumn{4}{c}{}\\[1mm]

\textbf{Context}	&	$\{\GF, \iota_{\PI3K}$, $\iota_{\cycE}\}$&	$\{\GF, \iota_{\PI3K}$, $\iota_{\cycE}\}$	\\
\hline 
\textbf{State}		&	$Y_{6}=\{\RTK$, $\RAS$, $\MAPK$, $\FOXO3$, $\Rb$, $\TSC$, $\PRAS40\}$		&	$Y_7 = Y_6$		\\
\hline 
\textbf{Status}		&	\cellcolor{cyan}\emph{No proliferation}		&	\cellcolor{cyan}\emph{No proliferation}		\end{longtable}
\end{example}

\section{Complexity results for the controllability of reaction systems}
\label{sec:complexity}

In this section we show that while unrestricted controllability is
trivial, it may become very complex if additional requirements are
imposed on context sequences.  This observation is also valid for
target controllability.

\subsection{Controllability is \PSPACE-hard}
\label{sec:controllability-pspace-hard}
We will show that $I$-controllability is at least as hard as
reachability, which is \PSPACE-complete~\cite{ReachabilityRS}.

\begin{theorem}\label{prop:i-pspace-hard}
  Let $\mathcal A = (S,A)$ be a reaction system and $I \subseteq S$.
  The problem $\mathcal C(\mathcal A, I)$ of deciding whether
  $\mathcal A$ is $I$-controllable is \PSPACE-hard.
\end{theorem}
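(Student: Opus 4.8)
The plan is to reduce the reachability problem to $I$-controllability. Given an instance of reachability---a reaction system $\mathcal B = (S', B)$ and two states $U, V \subseteq S'$, asking whether $V$ is reachable from $U$---I will construct a reaction system $\mathcal A = (S, A)$ together with a set $I \subseteq S$ such that $\mathcal A$ is $I$-controllable if and only if $V$ is reachable from $U$ in $\mathcal B$. The key design idea is to make $I$ so restrictive that the only nontrivial reachability question $\mathcal A$ has to answer, among all pairs $X, Y \subseteq S$, is precisely the original one about $U$ and $V$; every other pair should be rendered either trivially solvable or irrelevant (e.g. by making the target non-reachable, so the definition of $I$-controllability imposes no requirement for it).

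First I would augment the background set with a few bookkeeping symbols: a ``gadget'' that, when it receives any nonempty context drawn from $I$, collapses the state in a controlled way, plus markers that encode ``we are simulating $\mathcal B$ honestly''. The set $I$ would be chosen to contain only symbols that let the controller (a) do nothing, or (b) trigger the collapse/reset machinery, but crucially \emph{not} enough symbols to inject arbitrary species into the simulated copy of $S'$. Then I would arrange the reactions of $\mathcal A$ so that: starting from the encoded version of $U$ with empty context, $\mathcal A$ runs the reactions of $\mathcal B$ step by step (this uses the construction idea already recalled in the preliminaries, where reactions carry extra inhibitors); and the encoded version of $V$ is reachable from the encoded $U$ in $\mathcal A$ exactly when $V = \res_{\mathcal B}^\gamma(U)$ for some $\gamma$ in $\mathcal B$. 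For all other pairs $X, Y \subseteq S$, I would ensure $Y$ is simply not reachable in $\mathcal A$ unless it has a canonical ``well-formed'' shape, and well-formed targets other than the encoded $V$ are reachable via the trivial $X, (\text{collapse}), Y$-style sequence using only symbols from $I$. This way $I$-controllability of $\mathcal A$ reduces to the single reachability question.

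The main obstacle is the tension between two requirements on $I$: it must be \emph{large enough} that the easy pairs $(X,Y)$ are still controllable (otherwise $\mathcal A$ fails $I$-controllability for a spurious reason and the reduction breaks), yet \emph{small enough} that it cannot be used to shortcut the $U \to V$ simulation (otherwise $\mathcal A$ is $I$-controllable regardless of the reachability answer). The trick I expect to need is to let $I$ contain a single special "reset/absorb" symbol whose only effect is to drive the system toward a fixed absorbing configuration, from which any \emph{reachable} state can be re-entered without help from $I$; reachable states of $\mathcal A$ will be designed to include enough "free" states that the trivial control pattern works, while the encoded copy of $U$ is set up so that the reset symbol, applied there, does not bypass the simulation but instead just returns to the absorbing configuration. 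Verifying that this absorbing-state argument genuinely isolates the one hard instance---i.e. that no clever interleaving of resets and empty contexts can reach encoded-$V$ except by faithfully simulating $\mathcal B$---is the delicate part of the proof; the rest is routine checking that the construction is polynomial-time computable and that membership-in-$\PSPACE$-style bookkeeping goes through.
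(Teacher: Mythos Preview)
Your plan is far more elaborate than the paper's. The paper's proof is two lines: take $I=\emptyset$; then $I$-controllability of $\mathcal A$ asks whether every reachable $Y$ can be reached from every $X$ using empty contexts only, and this is declared \PSPACE-hard because context-independent reachability is \PSPACE-complete. No gadget, no reset symbol, no absorbing configuration. Since the input to $\mathcal C(\mathcal A,I)$ includes $I$, one is free to hard-wire $I=\emptyset$ in the reduction, and then the ``tension between two requirements on $I$'' that you flag as the main obstacle vanishes entirely---an empty $I$ is automatically ``small enough'' not to shortcut anything.

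Your route---building a nonempty $I$ together with collapse/absorb machinery so that exactly one $(X,Y)$ pair encodes the hard instance while all other pairs are trivially controllable---could in principle be made to work, and it would yield an explicit many-one reduction from a \emph{single} reachability instance, which the paper's terse argument actually glosses over (``universal reachability is hard because reachability is hard'' is not literally a reduction without further work). So your approach buys more rigor at the point where the paper is hand-wavy. But the machinery you sketch is substantial overkill: if you do want an explicit reduction, it is far easier to set $I=\emptyset$ and design $\mathcal A$ so that the encoded $(U,V)$ pair is the unique nontrivial reachability question, since with $I=\emptyset$ you never have to worry about the context bypassing the simulation.
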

\begin{proof}
  We will consider the particular case of $I$-controllability in which
  $I = \emptyset \subseteq S$.  Deciding that $\mathcal A$ is
  $\emptyset$-controllable is equivalent to deciding whether, for any
  pair of subsets $X, Y \subseteq S$, $Y$ is reachable from $X$.
  Since reachability for reaction systems is \PSPACE-complete, we
  conclude that $\emptyset$-controllability and therefore
  $I$-controllability of reaction systems is \PSPACE-hard.
\end{proof}

Similarly, reachability can be reduced to $n$-controllability.

\begin{theorem}\label{prop:n-pspace-hard}
  Let $\mathcal A = (S,A)$ be a reaction system and $0 \leq n < S$.
  The problem $\mathcal C(A, n)$ of deciding whether $\mathcal A$ is
  $n$-controllable is \PSPACE-hard.
\end{theorem}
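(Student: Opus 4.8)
The plan is to reuse the argument behind Theorem~\ref{prop:i-pspace-hard}, observing that the cardinality bound on context sets is most severe precisely when $n$ is smallest. First I would restrict attention to the admissible instance $n = 0$ (allowed since the definition only requires $0 \le n < |S|$, and the reachability instances we reduce from have $|S| \ge 1$). A context sequence all of whose sets have cardinality at most $0$ is a sequence of empty sets, so $0$-controllability of $\mathcal{A} = (S,A)$ is exactly the statement that for every pair $X, Y \subseteq S$ with $Y$ reachable in $\mathcal{A}$ there is a \emph{context-independent} interactive process from $X$ to $Y$; equivalently, $Y$ is reachable from $X$ using the empty context sequence.

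This is the very same condition as $\emptyset$-controllability, because the only subset of $\emptyset$ is $\emptyset$ itself. Hence the decision problem $\mathcal{C}(\mathcal{A}, 0)$ coincides with $\mathcal{C}(\mathcal{A}, \emptyset)$, which was shown \PSPACE-hard in the proof of Theorem~\ref{prop:i-pspace-hard} via the \PSPACE-complete reachability problem for reaction systems~\cite{ReachabilityRS}. Since a special case of $\mathcal{C}(\mathcal{A}, n)$ is already \PSPACE-hard, the problem $\mathcal{C}(\mathcal{A}, n)$ is \PSPACE-hard.

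I do not expect a real obstacle here: the only thing to notice is that $n = 0$ collapses the cardinality constraint onto the empty-context case already handled, so all the computational difficulty is inherited from the reachability result assumed earlier. The one point worth a remark is robustness of the statement: if one reads $n$ as a fixed positive parameter rather than part of the input, I would instead pad $S$ with $n$ fresh symbols appearing in no reaction (neither as reactants, inhibitors, nor products); injecting up to $n$ arbitrary symbols through a context can then only add these inert symbols, leaving the dynamics on the original symbols context-independent, and the same reduction goes through at the cost of only a polynomial increase in the size of the instance.
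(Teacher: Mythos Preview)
Your proof is correct and follows essentially the same route as the paper: both arguments specialise to $n = 0$, observe that this forces all contexts to be empty, and then invoke the \PSPACE-completeness of reachability exactly as in Theorem~\ref{prop:i-pspace-hard}. Your closing remark about padding $S$ with $n$ inert symbols to handle a fixed positive $n$ is a pleasant extra observation that the paper does not include, but it does not alter the core argument.
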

\begin{proof}
  As in the proof of Theorem~\ref{prop:i-pspace-hard}, we consider
  the particular case $n = 0$, which only allows empty sets as
  contexts.  In this case, $n$-controllability of $\mathcal{A}$ is
  equivalent to deciding whether, for any pair of subsets
  $X, Y \subseteq S$, $Y$ is reachable from $S$.  This implies that
  $n$-controllability is \PSPACE-hard.
\end{proof}

\subsection{Target controllability is \PSPACE-hard}
\label{sec:target-pspace-hard}
Target controllability can be reduced to ``full'' controllability by
allowing any species to be a control target: $T = S$.  This directly
implies that $I$-target controllability and $n$-target controllability
are both \PSPACE-hard.

\begin{corollary}
  Let $\mathcal A = (S,A)$ be a reaction system and
  $I, T \subseteq S$.  The problem $\mathcal{TC}(\mathcal A, T, I)$ of
  deciding whether $\mathcal A$ is $I$-controllable is \PSPACE-hard.
\end{corollary}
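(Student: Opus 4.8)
The plan is to exhibit full $I$-controllability as the special case of $I$-target controllability in which every species is declared a target, i.e.\ to reduce $\mathcal C(\mathcal A, I)$ to $\mathcal{TC}(\mathcal A, T, I)$ via the map $(\mathcal A, I) \mapsto (\mathcal A, S, I)$, and then to quote Theorem~\ref{prop:i-pspace-hard}. This map is computable in polynomial time, so it remains to check that $\mathcal A$ is $I$-controllable if and only if $(\mathcal A, S)$ is $I$-target controllable. When $T = S$, every state $W$ of $\mathcal A$ satisfies $W \cap T = W$, so the requirement ``$W_0 \cap T = X$ and $W_r \cap T = Y$'' in the definition of $I$-target controllability becomes literally ``$W_0 = X$ and $W_r = Y$'', i.e.\ the interactive process starts in $X$ and ends in $Y$ — exactly the requirement in the definition of $I$-controllability. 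Since reachability is \PSPACE-complete and hence $\mathcal C(\mathcal A, I)$ is \PSPACE-hard by Theorem~\ref{prop:i-pspace-hard}, this reduction shows $\mathcal{TC}(\mathcal A, T, I)$ is \PSPACE-hard. For a fully self-contained argument one may compose with the further specialisation $I = \emptyset$ already used in the proof of Theorem~\ref{prop:i-pspace-hard}, reducing directly from reachability.

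The one detail that needs care is the precondition on $Y$: $I$-target controllability quantifies over pairs $X, Y \subseteq T$ such that $Y$, or a state of $\mathcal A$ restricting to $Y$ on $T$, is reachable, whereas $I$-controllability quantifies over pairs with $Y$ itself reachable. For $T = S$ a reachable state restricting to $Y$ on $T$ is simply a reachable state equal to $Y$, so the two preconditions coincide and the identification of the two problems is exact. Even under the looser reading ``some superset of $Y$ is reachable'', the argument still goes through on the instances used in Theorem~\ref{prop:i-pspace-hard}: there only empty context sets are allowed, so every interactive process is context-independent, any end state equals $\res_{\mathcal A}$ applied to the previous state and is therefore reachable, and a non-reachable $Y$ can never be an end state regardless of which precondition is in force.

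I expect the definitional matching of $T = S$ with full controllability to be the only ``obstacle'', and it is a trivial one; the genuine \PSPACE-hardness content is inherited entirely from Theorem~\ref{prop:i-pspace-hard}, and ultimately from the \PSPACE-completeness of reachability for reaction systems. No new construction is required beyond specialising the definitions.
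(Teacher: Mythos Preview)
Your approach is exactly the paper's: specialise to $T=S$ so that $I$-target controllability collapses to $I$-controllability, and invoke Theorem~\ref{prop:i-pspace-hard}. You are in fact more careful than the paper, which states the $T=S$ reduction in one sentence without discussing the ``$Y$ or a superset reachable'' precondition at all.
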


\begin{corollary}
  Let $\mathcal A = (S,A)$ be a reaction system, $I \subseteq S$, and
  $0 \leq n \leq |S|$.  The problem $\mathcal{TC}(\mathcal A, T, n)$
  of deciding whether $\mathcal A$ is $n$-controllable is \PSPACE-hard.
\end{corollary}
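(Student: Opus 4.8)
The plan is to reduce the target-free $n$-controllability problem $\mathcal{C}(\mathcal{A},n)$, shown \PSPACE-hard in Theorem~\ref{prop:n-pspace-hard}, to $\mathcal{TC}(\mathcal{A},T,n)$, exactly along the lines of the remark preceding the statement: declare every species to be a target. Given an instance $(\mathcal{A},n)$ with $\mathcal{A}=(S,A)$, I would output the instance $(\mathcal{A},S,n)$, i.e.\ take $T=S$; this map is trivially computable in logarithmic space. It then suffices to prove that $(\mathcal{A},S)$ is $n$-target controllable if and only if $\mathcal{A}$ is $n$-controllable, after which the \PSPACE-hardness of $\mathcal{TC}(\mathcal{A},T,n)$ is immediate.

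For this equivalence, the key observation is that when $T=S$ we have $W\cap T=W$ for every state $W\subseteq S$. Consequently the defining requirement of $n$-target controllability---the existence of a context sequence of sets of cardinality at most $n$ generating an interactive process with initial state $W_0$ satisfying $W_0\cap T=X$ and some state $W_r$ satisfying $W_r\cap T=Y$---collapses to the requirement that the process start in $X$ and end in $Y$, which is exactly the condition defining $n$-controllability. I would spell this out in both directions, taking care that the quantification over pairs $(X,Y)$ lines up: $n$-controllability quantifies over $Y$ reachable in $\mathcal{A}$, whereas $n$-target controllability quantifies over $Y$, or a superset of $Y$, reachable. The point to check is that, with $T=S$, ending a process in a state $W_r$ with $W_r\cap S=Y$ forces $W_r=Y$ itself, so the pairs $(X,Y)$ on which the two definitions genuinely impose a constraint coincide. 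Getting this side-condition bookkeeping exactly right is the only real obstacle; everything else is a direct unwinding of the definitions.

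Finally, the identical argument---replacing the bound ``$n$'' by ``$I$'' throughout and invoking Theorem~\ref{prop:i-pspace-hard} in place of Theorem~\ref{prop:n-pspace-hard}---yields the companion corollary for $\mathcal{TC}(\mathcal{A},T,I)$. In other words, a single observation, namely that target controllability with $T=S$ is ordinary controllability, establishes both corollaries at once.
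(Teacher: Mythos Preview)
Your proposal is correct and follows exactly the paper's own justification: the corollary is obtained by taking $T = S$, so that target controllability specialises to ordinary controllability, and then invoking Theorem~\ref{prop:n-pspace-hard}. The paper gives this as a one-line remark before stating the two corollaries, and you have spelled out the same reduction with additional care about the ``$Y$ or a superset reachable'' side condition.
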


These results are degenerate in the sense that they focus on the
situations in which control inputs are disallowed: $I = \emptyset$ or
$n = 0$.  We will now show that $I$-controllability for reaction
systems is \PSPACE-hard even when some control inputs \emph{must} be
provided.

\begin{theorem}
  Let $\mathcal A = (S,A)$ be a reaction system and
  $I, T \subseteq S$, such that $I \neq \emptyset$.  The problem
  $\mathcal{TC}(\mathcal A, T, I)$ of deciding whether $\mathcal A$ is
  $I$-controllable is \PSPACE-hard.
\end{theorem}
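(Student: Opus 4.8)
The plan is to reduce from the degenerate case $I = \emptyset$, which is already known to be \PSPACE-hard: by Theorem~\ref{prop:i-pspace-hard} $\emptyset$-controllability of reaction systems is \PSPACE-hard, and taking $T = S$ turns this into the same statement about $\emptyset$-target controllability. The observation driving the reduction is that a \emph{nonempty} input set can be made operationally equivalent to the empty one by letting it consist of a single fresh symbol that occurs in no reaction: injecting such a symbol through the context changes nothing about the dynamics except the bare presence of that symbol in the current state, and that presence can be hidden from the targets.

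Concretely, given an instance $(\mathcal{A}_0, T_0, \emptyset)$ of the target controllability problem with $\mathcal{A}_0 = (S_0, A_0)$ and $T_0 \subseteq S_0$, I would pick a new symbol $d \notin S_0$ and output $(\mathcal{A}, T, I)$ with $\mathcal{A} = (S_0 \cup \{d\}, A_0)$, $T = T_0$, and $I = \{d\}$. This is computable in logarithmic space, $A_0$ is a valid set of reactions over $S_0 \cup \{d\}$, and $I \neq \emptyset$ as required. The correctness claim is that $(\mathcal{A}, T)$ is $\{d\}$-target controllable if and only if $(\mathcal{A}_0, T_0)$ is $\emptyset$-target controllable, and I would establish it from two ingredients. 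First, the identity $\res_\mathcal{A}(W) = \res_{\mathcal{A}_0}(W \cap S_0) \subseteq S_0$ for every $W \subseteq S_0 \cup \{d\}$, which holds because $d$ is neither a reactant, nor an inhibitor, nor a product of any reaction; in particular $\res_\mathcal{A}$ and $\res_{\mathcal{A}_0}$ have the same image, so for $Y \subseteq T_0$ the side condition ``$Y$ or a superset of it is reachable'' has the same truth value in both systems. Second, along any context sequence $C_0, \dots, C_n$ over $I = \{d\}$ one has $C_i \cap S_0 = \emptyset$, so an induction using the identity above shows that the result sequence coincides with that of the all-empty context sequence of $\mathcal{A}_0$ of the same length; since $d \notin T$, the two state sequences then agree on their $T$-projections. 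Hence an $I$-context sequence of $\mathcal{A}$ realizes a transition between subsets $X, Y$ of $T$ exactly when an $\emptyset$-context sequence of $\mathcal{A}_0$ realizes the same transition between the corresponding subsets of $T_0$, which yields the equivalence.

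I expect the only delicate point to be the bookkeeping: one must check that the ``$Y$ or a superset reachable'' clause and the treatment of the initial context $C_0$ are both genuinely insensitive to the inert symbol $d$ (both follow from $d \notin T$ and from $d$ never being produced), so that no spurious gap opens between the two problems; everything else is a routine step-by-step simulation. Since $\emptyset$-target controllability is \PSPACE-hard, this reduction shows that $\mathcal{TC}(\mathcal{A}, T, I)$ remains \PSPACE-hard even when restricted to instances with $I \neq \emptyset$.
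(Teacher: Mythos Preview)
Your proposal is correct and follows essentially the same approach as the paper: both add fresh inert symbols to the background set, take $I$ to be those symbols, and argue that since they appear in no reaction the $I$-contexts have no effect on the dynamics, reducing the question to the $\emptyset$-case. The only cosmetic differences are that the paper allows an arbitrary extension $S' \supsetneq S$ (you use a single symbol $d$) and takes $T = S$ (you keep an arbitrary $T_0$); your treatment of the ``$Y$ or a superset reachable'' clause is in fact more careful than the paper's.
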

\begin{proof}
  Consider an arbitrary reaction system $\mathcal{A} = (S, A)$, an
  extension of the background set $S' \supsetneq S$, and the reaction
  system $\mathcal{A}' = (S', A)$ over the extended alphabet, but with
  the same reactions as $\mathcal{A}$.  Let $I = S' \setminus S$ and
  $T = S$.  Then $(\mathcal{A}', T)$ is $I$-target controllable if and
  only if $\mathcal{A}$ is $\emptyset$-controllable.

  Indeed, by construction the species from $I = S' \setminus S$ have
  no influence on the reactions in $A$.  Therefore, given any pair of
  sets $X, Y \subseteq S = T$, if $\mathcal{A}'$ can reach $Y$ from
  $X$ with contexts from $I$, it can reach $Y$ from $X$ with an empty
  context sequence.  Since $\mathcal{A}'$ and $\mathcal{A}$ have
  exactly the same reactions, this also means that if $\mathcal{A}'$
  can reach $Y$ from $X$, $\mathcal{A}$ can reach $Y$ from $X$ as
  well.  Conversely, if $\mathcal{A}$ can reach $Y$ from $X$ with a
  sequence of empty contexts, $\mathcal{A}'$ can reach $Y$ from $X$
  with a sequence of empty contexts, or indeed with any sequence of
  contexts from $I$ of the same length.

  We conclude that, for non-empty $I$, $I$-target controllability is
  at least as hard as $I$-controllability, which proves the statement
  of the theorem.
\end{proof}

The previous proof relies on restricting the set $I$ to symbols not
actually appearing in the reactions.  We will now show that $I$-target
controllability is \PSPACE-hard even when contexts are allowed to
inject symbols appearing in the reactants and the inhibitors of
reactions.

Before stating and proving this result, we formulate the following
helper notion.  Consider the reaction system $\mathcal{A} = (S, A)$
and take an extension of the background set $S' \supseteq S$.  We
define the \emph{nonce-extension} $\nonce_{S \to S'}(A)$ of $A$ from
$S$ to $S'$ in the following way:
\[
  \nonce_{S \to S'}(A) = \{(R \cup R', I \cup I', P)
  \mid a \in A, R' \subseteq S' \setminus S,
  I' \subseteq S' \setminus S\}.
\]
Note that $A \subseteq \nonce_{S \to S'}(A)$, because
$\emptyset \subseteq S' \setminus S$.

\medskip

\begin{theorem}
  Let $\mathcal A = (S,A)$ be a reaction system and
  $E, T \subseteq S$ such that $E \cap \rsc(A) \neq \emptyset$.  The problem $\mathcal{TC}(\mathcal A, T, E)$ of deciding
  whether $\mathcal A$ is $E$-controllable is \PSPACE-hard.
\end{theorem}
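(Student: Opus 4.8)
The plan is to reduce from the problem shown \PSPACE-hard in the previous theorem — $I$-target controllability with $I$ disjoint from $\rsc(A)$ — and to ``upgrade'' the set of control inputs into $\rsc(A)$ by applying the nonce-extension. Inspecting the previous proof, its hardness already holds for instances of the special shape $(\mathcal A' = (S', A),\, T = S,\, I = S'\setminus S)$ with $S' \supsetneq S$, and a single extra symbol suffices there, so we may assume $S' = S \cup \{z\}$ and $I = \{z\}$ (hence $z \notin \rsc(A)$). Given such an instance, I will replace the reaction set $A$ by its nonce-extension $A^\dagger = \nonce_{S \to S'}(A)$, obtain $\mathcal A^\dagger = (S', A^\dagger)$, and output $(\mathcal A^\dagger,\, T,\, E)$ with $E := I = \{z\}$. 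Since $|S'\setminus S| = 1$, each reaction of $A$ yields only three reactions in $A^\dagger$ (the admissible choices $(\emptyset,\emptyset)$, $(\{z\},\emptyset)$, $(\emptyset,\{z\})$ of extra reactant/inhibitor), so $|A^\dagger| = 3|A|$ and the reduction is polynomial.

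The heart of the argument is that nonce-extension does not change the dynamics at all: $\res_{A^\dagger} = \res_A$ as functions on $\mathcal P(S')$. Indeed $A \subseteq A^\dagger$ (take empty extra reactant and inhibitor sets), giving $\res_A(W) \subseteq \res_{A^\dagger}(W)$ for every $W$; conversely every reaction in $A^\dagger$ arises from some reaction of $A$ by only adding reactants and inhibitors, so whenever it is enabled on $W$ the underlying reaction of $A$ is enabled on $W$ too, whence $\res_{A^\dagger}(W) \subseteq \res_A(W)$. (Triples in which the extra reactant and inhibitor sets overlap are not reactions and are simply ignored; they play no role.) Thus $\mathcal A^\dagger$ and $\mathcal A'$ have identical result functions, hence identical interactive processes and identical reachable states — they are the same dynamical system on $\mathcal P(S')$, carried by different nominal reaction sets. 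In particular $(\mathcal A^\dagger, T)$ is $E$-target controllable if and only if $(\mathcal A', T)$ is $I$-target controllable, which is \PSPACE-hard. This routing through the previous theorem is what keeps the proof clean: a direct reduction from reachability or from $\emptyset$-controllability would force one to compare the new system to a strictly smaller one and to reconcile the ``$Y$ or a superset of it reachable'' clause and the choice of initial state with plain reachability, whereas here no such comparison arises because the set of reachable states is literally unchanged.

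What has changed between $\mathcal A'$ and $\mathcal A^\dagger$ is only the resource set: assuming $A \neq \emptyset$ (otherwise the target-controllability question is trivial), pick any reaction $(R, I_0, P) \in A$; then $(R \cup \{z\}, I_0, P) \in A^\dagger$ has $z$ among its reactants, so $z \in \rsc(A^\dagger)$ and therefore $E \cap \rsc(A^\dagger) = \{z\} \neq \emptyset$, exactly as the statement demands. This completes the reduction and establishes \PSPACE-hardness. The main obstacle — really the only content — is the identity $\res_{A^\dagger} = \res_A$ on $\mathcal P(S')$: one must check that injecting the now ``live'' symbol $z$ through the context still has no effect whatsoever, which holds precisely because every reaction of $A$ occurs in $A^\dagger$ in a $z$-requiring variant, a $z$-forbidding variant, and a $z$-agnostic variant simultaneously. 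A secondary point is keeping the blow-up polynomial, which is why we nonce-extend by a single fresh symbol only; this is harmless since the previous theorem's hardness is already witnessed by single-symbol instances.
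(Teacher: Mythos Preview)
Your proof is correct and follows essentially the same approach as the paper: both arguments hinge on the nonce-extension and the observation that adjoining copies of reactions with extra reactants/inhibitors from $S'\setminus S$ leaves the result function unchanged. The only cosmetic differences are that you route through the previous theorem rather than directly from $\emptyset$-controllability, and you are more explicit than the paper about keeping the blow-up polynomial (restricting to one fresh symbol, so $|A^\dagger|=3|A|$) and about verifying the hypothesis $E\cap\rsc(A^\dagger)\neq\emptyset$.
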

\begin{proof}
  Consider a reaction system $\mathcal{A} = (S, A)$ and an extension
  of the background set $S' \supsetneq S$.  Construct now a new
  reaction system $\mathcal{A}' = (S', A')$ with the properties
  $A \subseteq A' \subseteq \nonce_{S \to S'}(A)$.  Let
  $E = S' \setminus S$ and $T = S$.  Then $(\mathcal{A}', T)$ is
  $E$-target controllable if and only if $\mathcal{A}$ is
  $\emptyset$-controllable.  Stronger yet, for any $Z \subseteq S'$,
  $\res_{\mathcal A'}(Z) = \res_{\mathcal A}(Z \cap S)$.  Indeed,
  suppose reaction $a' = (R, I, P) \in A'$ is enabled by $Z$,
  i.e.~$R \subseteq Z$ and $I \cap Z = \emptyset$.  Then, trivially,
  the reaction $a = (R \cap S, I \cap S, P)$ is enabled by $Z \cap S$,
  and $a \in A$ by construction of $A'$.  On the other hand, if
  reaction $a \in A$ is enabled by $Z \cap S$, then it is also enabled
  by $Z$, because $(Z \setminus S) \cap (R \cup I) = \emptyset$.
\end{proof}

\section{Conclusions}

The controllability problem is of high interest in dynamical systems, having as its aim the ability to change the system's configuration through well chosen sequences of external interventions. We initiated in this paper the study of controllability for reaction systems. The reaction systems framework has all the key ingredients necessary for a natural definition of the controllability problem: system dynamics through interactive processes, state transitions, external interventions through context sequences. We defined several natural variants of the controllability problem for reaction systems.

We introduced the first reaction system-based oncogenic signalling model in the literature, a model that we believe will be of independent interest to the reaction systems community. The model includes several of the best studied cancer signalling pathways and follows their interplay leading to tumour proliferation, both in the case of external growth factor signals, as well as in their absence. We used this example to show how much diversity of options there is in the concept of controllability for reaction systems. The complexity of dynamics shown through this example anticipated the computational complexity results we proved in this article, showing that the controllability problem is \PSPACE-hard. 

Several topics of interest remain to be explored around controllability of reaction systems, for example, and in no particular order:
\begin{enumerate}
\item 	Are there formulations of the problem that are computationally easy (in the sense of computational complexity theory), perhaps based on minimal reaction systems? 
\item Define and study the concept of stable controllability, where a constant (or an ultimately constant) context sequence leads to the desired state, that moreover is a steady state of the reaction system with the given constant context sequence. 
\item Find efficient heuristics for the controllability of reaction systems, identifying (not necessarily optimal) context sequences solving a given controllability problem. 
\end{enumerate}
We believe that these topics should give further insight into the
potential of reaction systems as a qualitative framework for
biomodelling.

\section*{Funding}
Sergiu Ivanov was partially supported by Computer Science Network of Paris \^{I}le-de-France Region, project AAP DIM RFSI 2018-03.
Ion Petre was partially supported by the Romanian National Authority for Scientific Research and Innovation (POC grant P\_37\_257 and PED grant 2391).

\section*{Data availability}
All data analysed in this study (the reaction systems model and the interactive processes) is described in full in the article.

\section*{Conflict of interest}

The authors declare that they have no conflict of interest.

\bibliographystyle{plain}
\bibliography{rs-control}

\end{document}